\newtheorem{prop}{Proposition}
\newtheorem{assumption}{Assumption}
\newtheorem{coro}{Corollary}
\def\BibTeX{{\rm B\kern-.05em{\sc i\kern-.025em b}\kern-.08em
    T\kern-.1667em\lower.7ex\hbox{E}\kern-.125emX}}
\begin{document}

\def\Ab{\mathbf{A}}
\def\Bb{\mathbf{B}}
\def\Qb{\mathbf{Q}}
\def\Db{\mathbf{D}}
\def\Fb{\mathbf{F}}
\def\xb{\mathbf{x}}
\def\ub{\mathbf{u}}
\def\zb{\mathbf{z}}
\def\nb{\mathbf{n}}

\def\xbhat{\hat{\xb}}

\title{Goal-Oriented State Information Compression for Linear Dynamical System Control\\
}


\author{\IEEEauthorblockN{Li Wang\IEEEauthorrefmark{1}\IEEEauthorrefmark{2},
Chao Zhang\IEEEauthorrefmark{1}\IEEEauthorrefmark{2},
Samson Lasaulce\IEEEauthorrefmark{1}\IEEEauthorrefmark{3}, 
Lina Bariah\IEEEauthorrefmark{1}\IEEEauthorrefmark{4}, and
Merouane Debbah\IEEEauthorrefmark{1}}
\IEEEauthorblockA{\IEEEauthorrefmark{1}6G Research Center, Khalifa University, Abu Dhabi, UAE}
\IEEEauthorblockA{\IEEEauthorrefmark{2}Central South University, Changsha, China}
\IEEEauthorblockA{\IEEEauthorrefmark{3}University of Lorraine, Nancy, France}
\IEEEauthorblockA{\IEEEauthorrefmark{4}Open Innovation AI, Abu Dhabi, UAE}
\vspace{-12pt}}
\maketitle

\begin{abstract}
In this paper, we consider controlled linear dynamical systems in which the controller has only access to a compressed version of the system state. The technical problem we investigate is that of allocating compression resources over time such that the control performance degradation induced by compression is minimized. This can be formulated as an optimization problem to find the optimal resource allocation policy. Under mild assumptions, this optimization problem can be proved to have the same well-known structure as in \cite{Vaidyanathan-book-2006}, allowing the optimal resource allocation policy to be determined in closed-form. The obtained insights behind the optimal policy provide clear guidelines on the issue of "when to communicate" and "how to communicate" in dynamical systems with restricted communication resources. The obtained simulation results confirm the efficiency of the proposed allocation policy and illustrate the gain over the widely used uniform rate allocation policy.

\end{abstract}

\begin{IEEEkeywords}
goal-oriented compression; networked linear dynamical system; resource allocation; linear quadratic regulator.
\end{IEEEkeywords}

\section{Introduction}

The rapid growth of data-driven technologies and widespread use of smart IoT devices has led to a massive increase of data volumes, challenging future wireless networks to enhance transmission rates, energy efficiency, latency, and reliability. As network capacity expansion lags behind data growth, revisiting data compression techniques becomes crucial for efficient data transmission and resource management. 

Recent developments in semantic and goal-oriented compression methods have become key to addressing increasing data volumes \cite{Zhang-IoTM-2022}\cite{Deniz-JSAC-2023}. Unlike traditional compression, which focuses on minimal data loss, these methods align compression with specific tasks, maintaining quality and essential features for objectives such as real-time decision-making \cite{Zhang-AE-2021}, efficient resource allocation \cite{Zhang-Tcom-2023}, and inference \cite{Chen-Tc-2023}, among others. For instance, \cite{Hang-JSAC-2023} considers the final use of the data to the system task modeled by an optimization problem and utilizes high-resolution quantization theory to establish the effectiveness of a goal-oriented quantization scheme. The authors of \cite{Niu-arxiv-2024} proposed a synonymous mapping to describe the correlation between the semantic information and the original source data, reducing coding length in both lossy and lossless scenarios. Given the abstract nature of semantic information, deep learning techniques are extensively employed to extract relevant semantic content and key task features, employing methods such as transformer-based joint-source encoder-decoders \cite{Qin-TSP-2021} and generative models \cite{Barbarossa}. Most existing studies focus on static systems with independent and identically distributed (i.i.d.) observation processes, emphasizing data meaning preservation over managing dynamic tasks, rather than handling tasks that involve dynamic or time-varying random processes. As the next-generation communication systems are expected to remotely control a wide range of dynamic systems, including cars, robots, and drones, optimizing the use of communication resources is essential to ensure reliable and efficient wireless connectivity for these advanced applications. 

In dynamical systems, actions have long-lasting effects on future states, often sacrificing immediate rewards for long-term benefits. The information to be compressed, including system states and observations, typically exhibits memory and can be modeled with linear recursive equations, Markov decision processes (MDP), or more complex structures. This feature highlights the importance of temporal correlation in the development of control policies. Various studies have addressed the dynamic aspects of systems; for instance, \cite{Kui-wiopt-2023} proposed adaptive power control policies in linear systems for vehicle platooning, while \cite{Vineeth} introduces a threshold-based communication scheme that leverages a waiting time policy for energy-efficient wireless control of nonlinear systems. In MDP settings, \cite{Pappas} develops a goal-oriented sampling policy to minimize actuation errors in real-time tracking under resource limitations, and \cite{Mostaani} explores the integration of quantization and control in bit-budgeted communications. Despite these advancements, the impact of compression noise on dynamical systems remains under-explored, and addressing this gap is a primary focus of our current research.

In this paper, we propose a goal-oriented compression framework for optimal resource allocation, focusing on the effects of compression noise. We analyze the noise impact using the linear quadratic regulator (LQR) problem, employing a non-i.i.d. system state modeled by a recursive equation. Utilizing Shannon's rate-distortion theory, we establish a relationship between noise variance and transmission rate, deriving a closed-form expression for the optimal rate that decreases over time, indicating a higher initial resource allocation in an invariant system. The presented simulation results further validate the efficiency of our proposed method.

\section{Problem formulation}
\label{sec:problem formulation}

The rate-limited control setting discussed in this article features an LQR system with a state vector $\xb_t \in \mathbb{R}^n$ and a communication link between an observer and the controller, as depicted in Fig. \ref{fig:system}.
\vspace{-0.3cm}
\begin{figure}[htbp]
\centering
\includegraphics[width=0.8\linewidth]{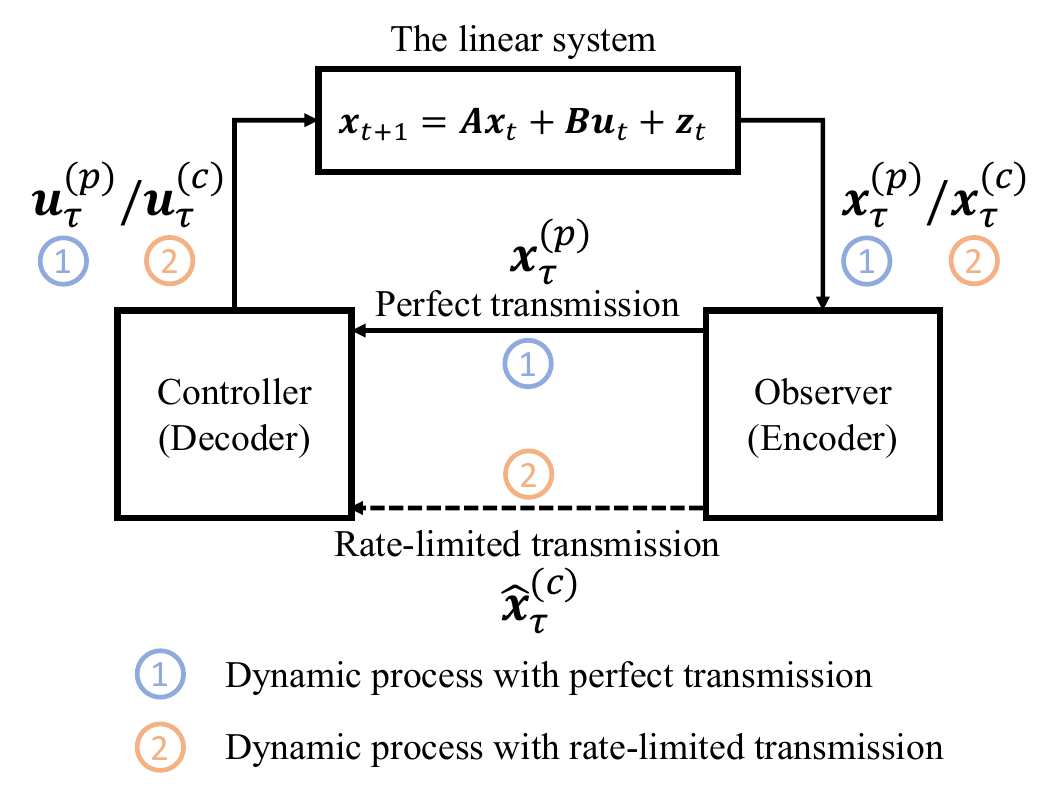}
\caption{LQR setting with a communication link (the dashed line) from the observer to the controller.}
\label{fig:system}
\end{figure}

\vspace{-0.1cm}
To evaluate performance degradation due to limited transmission rates, it's crucial to analyze state deviations from imperfect communication. We compare a control system with perfect communication to one with rate-limited communication, denoting their state-control pairs as $(\xb^{(p)}_{\tau},\ub^{(p)}_{\tau})$ and $(\xb^{(c)}_{\tau},\ub^{(c)}_{\tau})$, respectively.
\subsection{Perfect transmission}
The linear dynamical system with perfect communication link is described by:
\begin{equation}
\mathbf{x}^{(p)}_{t+1}=\Ab\xb^{(p)}_t+\Bb\ub^{(p)}_t+\zb_t,
\end{equation}
where $\Ab\in \mathbb{R}^{n\times n}$ and $\Bb \in \mathbb{R}^{n\times m}$ are system matrices, $\xb^{(p)}_t \in \mathbb{R}^n$ is the state, $\ub^{(p)}_t \in \mathbb{R}^m$ is the control input and $\zb_0,\zb_1,\dots$ are i.i.d. disturbances with zero mean and covariance $\Sigma_z \in \mathbb{S}^n_{++}$, where $\mathbb{S}^n_{++}$ denotes the set of all positive definite $n \times n$ matrices.

For a finite time system, the design of the control policy in a typical control system with perfect transmission aims to minimize the average-stage LQR cost upon reaching the horizon $T$:
\begin{equation}
\min_{\ub^{(p)}_0,\ub^{(p)}_1,\dots} J^{(p)} \triangleq \frac{1}{T}\mathbb{E}\left[ \sum_{t=0}^{T-1} \xb^{(p)'}_t\Qb\xb^{(p)}_t+\ub^{(p)'}_t\Db\ub^{(p)}_t \right],
\label{eq:system policy criterion}
\end{equation}
where $x'$ represents the transpose, and $\Qb\in \mathbb{S}^n_{++}$ and $\Db\in \mathbb{S}^n_{++}$ are cost matrices. The expectation is calculated with respect to the disturbance process $\zb_t$.
It has been shown that to solve (\ref{eq:system policy criterion}), the optimal strategy is to use control actions which are proportional to the current system state as follows
\begin{equation}
    \ub^{(p)}_t=\Fb_t\xb^{(p)}_t,
\end{equation}
where $\Fb_t\in\mathbb{R}^{m\times n}$ is the control matrix and can be computed as in \cite{shaiju2008formulas}. By inserting $\ub^{(p)}_t$ into (\ref{eq:system policy criterion}), $\Fb_t,~0\leq t\leq T$ are estimated by:
\begin{equation}
    \min_{\Fb_0,\Fb_1,\dots}J^{(p)} \triangleq \frac{1}{T}\mathbb{E}\left[ \sum_{t=0}^{T-1} \xb^{(p)'}_t( \Qb+\Fb_t'\Db\Fb_t )\xb^{(p)}_t \right],
\end{equation}
where $T$ denotes the transpose of matrix.

\subsection{Rate-limited transmission}
In a rate-limited system, the observed system state is transmitted imperfectly. We denote $\nb_t\in\mathbb{R}^n$ as the deviation between the real system state and the compressed system state as follows:
\begin{equation}
    \hat{\xb}^{(c)}_t = \xb^{(c)}_t + \nb_t^{(c)},~~\forall~0\leq t\leq T,
    \label{eq:compression of state}
\end{equation}
where 
\begin{equation}
    \nb_t^{(c)} \sim p(R_t),~~\forall~0\leq t\leq T.
\end{equation}
Here $p(R_t)$ represents a probability density function (p.d.f.) that is a function of the transmission rate $R_t$ (rate allocated at time $t$). A higher transmission rate $R_t$ often yields a lower value of $\nb_t$, leading to a smaller variance of the distribution $p(R_t)$. With the compression noise, biased control inputs $\ub^{(c)}_t$ and biased respective states $\xb^{(c)}_t$ are derived, according to the dynamic process, as follows
\begin{align}
    \ub^{(c)}_t &= \Fb_t\hat{\xb}^{(c)}_t, \notag \\
    \xb^{(c)}_{t+1} &= \Ab\xb^{(c)}_t + \Bb\ub^{(c)}_t + \zb_t,
\end{align}
where $\ub^{(c)}_t = \Fb_t\hat{\xb}^{(c)}_t$, and $\hat{\xb}^{(c)}_t$ is the compressed observation of the state at time $t$, as defined in (\ref{eq:compression of state}).
The average-stage LQR cost then become
\begin{equation}
    J^{(c)}(R_0,R_1,\dots,R_T) \triangleq \frac{1}{T}\mathbb{E}\left[ \sum_{t=0}^{T-1} \xb^{(c)'}_t\Qb\xb^{(c)}_t+\ub^{(c)'}_t\Db\ub^{(c)}_t \right]
    \label{eq:J_c}
\end{equation}

It can be noticed from \ref{eq:J_c} that the compression noise added at time $t$ will have influence on the LQG cost. It is also worthy to note that the noise intensity at time $t$ is influenced by the allocated rate at that time. When the allocated rate is unlimited, the compression noise is negligible.
However, in a linear dynamical system with rate-limited communication, the rate allocated at each time is limited by the available resources. In this work, we consider the scenario where the total rate budget is limited, i.e., the sum of rate should be smaller than a threshold, namely $\sum_{t=0}^T R_t < R_{\mathrm{sum}}$, where $R_t,~0\leq t\leq T$ is the allocated rate at time $t$, and $R_{\mathrm{sum}}$ represent the upper bound of the overall rate.

The goal of the resource allocation scheme is to minimize the LQR costs caused by the compression process, while ensuring that the transmission rate is within the available rate budget. More precisely, the rate allocation scheme can be formulated as the following constrained optimization problem
\begin{gather}
     \min_{(R_0,R_1,\dots,R_T)} J(R_0,R_1,\dots,R_T) \triangleq J^{(c)} - J^{(p)}, \notag \\
     \mathrm{s.t.}~~\sum_{t=0}^T R_t \leq R_{\mathrm{sum} }   
    \label{eq:diff_J}
\end{gather}

\section{Analysis of the rate allocation problem}
For notational convenience and calculation simplification, we study the standard setup of discrete LQR problems as in Assumption 1. However, we would like to claim that the way to derive the rate allocation policy for vector state systems is quite similar to scalar state systems.

\begin{assumption}
The system state and control action are both scalars.
\end{assumption}

According to this assumption, we rewrite the perfect communication system as follows \vspace{-3pt}
\begin{align}
x_{t+1}^{(p)} &= Ax_t^{(p)} + Bu_t^{(p)} + z_t, \notag \\
u_t^{(p)} &= F_t \cdot x_t^{(p)},
\label{eq:LTI sys}
\end{align}

Assuming that the controller coefficients are optimized and fixed, then a system with a  rate-limited communication can be given by \vspace{-3pt}
\begin{align}
x_{t+1}^{(c)} &= Ax_t^{(c)} + Bu_t^{(c)} + z_t, \notag \\
u_t^{(c)} &= F_t \cdot \hat{x}_t^{(c)}, \notag \\
\hat{x}_t^{(c)} &= x_t^{(c)} + n_t^{(c)},
\label{eq:LTI sys with compression}
\end{align}

 In order to quantify the performance degradation caused by rate-limited communications in LQR problems, it is essential to evaluate the biases between system states with and without noise, and this is the motivation behind the following proposition.
\begin{prop}
    For a rate-limited invariant dynamical system, the state at time $t+1$ can be derived from the state in a perfect-transmission system as
    \begin{equation}
        x_{t+1}^{(c)} = x_{t+1}^{(p)} + \sum_{m=0}^{t} \left(\prod_{i=0}^{t-m}(A*\mathrm{sgn}(i)+BF_{m+i}) \right) n_{m}^{(c)},
    \end{equation}
    for $0\leq t\leq T$, where $x_{t+1}^{(p)}$ and $x_{t+1}^{(c)}$ represent respectively the state in a perfect-transmitted system, and the state with compressed transmission, at time $t+1$. $\mathrm{sgn(\cdot)}$ is the sign function.
\end{prop}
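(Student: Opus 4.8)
The plan is to work entirely with the scalar error process and reduce the claim to solving a first-order linear recursion. First I would substitute the control law $u_t^{(c)} = F_t \hat{x}_t^{(c)} = F_t(x_t^{(c)} + n_t^{(c)})$ into the state update of the rate-limited system (\ref{eq:LTI sys with compression}) to obtain the closed-loop form $x_{t+1}^{(c)} = (A + B F_t)\,x_t^{(c)} + B F_t\, n_t^{(c)} + z_t$. Doing the same for the perfect-transmission system (\ref{eq:LTI sys}) gives $x_{t+1}^{(p)} = (A + B F_t)\,x_t^{(p)} + z_t$, where the two systems are driven by the identical disturbance realization $z_t$ and start from the same initial state.

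Next I would introduce the deviation $e_t \triangleq x_t^{(c)} - x_t^{(p)}$ and subtract the two closed-loop equations. The common disturbance $z_t$ cancels, leaving the driven recursion $e_{t+1} = (A + B F_t)\,e_t + B F_t\, n_t^{(c)}$ with $e_0 = 0$, since both trajectories agree at initialization. The entire proposition is then equivalent to the closed form $e_{t+1} = \sum_{m=0}^{t}\left(\prod_{i=0}^{t-m}(A\,\mathrm{sgn}(i) + B F_{m+i})\right) n_m^{(c)}$.

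I would establish this closed form by induction on $t$ (equivalently, by unrolling the recursion). The base case $t = 0$ gives $e_1 = B F_0\, n_0^{(c)}$, which matches the formula because the only factor is $i=0$, where $\mathrm{sgn}(0)=0$ collapses $A\,\mathrm{sgn}(0) + B F_0$ to $B F_0$. For the inductive step, inserting the assumed expression for $e_t$ into $e_{t+1} = (A + B F_t)\,e_t + B F_t\, n_t^{(c)}$ premultiplies each surviving summand by $(A + B F_t)$ and appends the fresh term $B F_t\, n_t^{(c)}$; the task is to verify that this reproduces exactly $\prod_{i=0}^{t-m}(\cdot)$ for every $m$, the role of the sign function being precisely to replace the leading ($i=0$) factor of each product by $B F_m$ while keeping every subsequent factor in the form $A + B F_{m+i}$.

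The main obstacle is purely this index-matching, namely confirming that $\prod_{i=0}^{t-m}(A\,\mathrm{sgn}(i) + B F_{m+i}) = B F_m \prod_{j=m+1}^{t}(A + B F_j)$ and that premultiplication by $(A + B F_t)$ correctly extends each product by one factor while the injected term $B F_t\, n_t^{(c)}$ fills the diagonal slot $m=t$ (an empty trailing product, i.e. the single factor $B F_t$). No analytic difficulty arises; the only care required is the $\mathrm{sgn}(0)=0$ convention at the start of each product and the treatment of the singleton product at $m=t$. Once this correspondence is checked, the closed form follows and the proposition is proved.
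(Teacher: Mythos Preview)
Your proposal is correct and follows essentially the same route as the paper: derive the closed-loop recursions for both systems, subtract to obtain the error recursion $e_{t+1}=(A+BF_t)e_t+BF_t\,n_t^{(c)}$ with $e_0=0$, and then unroll it. The paper simply states the unrolled closed form whereas you spell out the induction and the $\mathrm{sgn}(0)=0$ bookkeeping, but the argument is the same.
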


\begin{proof}
From \eqref{eq:LTI sys} and \eqref{eq:LTI sys with compression}, we can derive
\begin{align}
    x_{t+1}^{(p)} &= (A+BF_t)x_t^{(p)} + z_t \label{eq:AR of perfect system}\\
    x_{t+1}^{(c)} &= (A+BF_t)x_t^{(c)} + z_t + (BF_t) n_t^{(c)},\label{eq:AR of compressed system}
\end{align}
for $0\leq t\leq T$. By subtracting \eqref{eq:AR of perfect system} from \eqref{eq:AR of compressed system}, the auto-regressive function can be derived as
\begin{equation}
    x_{t+1}^{(c)} - x_{t+1}^{(p)} = (A+BF_t)(x_t^{(c)}-x_t^{(p)}) + (BF_t) n_t^{(c)}.
\end{equation}

Note that the first state is not influenced by a compressed control policy, i.e., $x_0^{(c)} = x_0^{(p)}$, thus we can obtain
\begin{align}
x_{t+1}^{(c)} - x_{t+1}^{(p)} = \sum_{m=0}^{t} \left(\prod_{i=0}^{t-m}(A*\mathrm{sgn}(i)+BF_{m+i}) \right) n_{m}^{(c)}. \notag
\end{align}
By applying transposition of terms, we obtain
\begin{equation}
    x_{t+1}^{(c)}  = x_{t+1}^{(p)} + \sum_{m=0}^{t} \left(\prod_{i=0}^{t-m}(A*\mathrm{sgn}(i)+BF_{m+i}) \right) n_{m}^{(c)}.
    \label{eq:relation of xc and xp}
\end{equation}
\end{proof}

While the differences in system states can be explicitly described, assessing the impact of compression noise on system costs is challenging due to the complex nature of noise. To address this, we make specific assumptions to simplify the theoretical analysis. 
\begin{assumption}
The compression noise is assumed to be independent of the signals and follows the Gaussian distribution, namely,
\begin{equation}
n_t^{(c)} \sim \mathcal{N}(0, c^2 2^{-2R_t}),
\label{assumption: compression noise Gaussian distribution}
\end{equation}
where $R_t$ is the rate allocated to transmit the state at time slot $t$, and $c$ is a compression-related constant.
\end{assumption}

Based on this assumption, the system state bias could be further derived in the following proposition.
\begin{prop}
When the compression noise satisfies Assumption 2, the system state can be expressed as
\begin{equation}
    x_{t+1}^{(c)} =  x_{t+1}^{(p)} + \bar{n}_{t+1}^{(c)},
\end{equation}
where
\begin{equation}
    \bar{n}_{t+1}^{(c)} \sim \mathcal{N}(0, \sigma_{t+1}^2)
\end{equation}
and
\begin{equation}
    \sigma_{t+1}^2 =  c^2\sum_{m=0}^{t}  \left(\prod_{i=0}^{t-m}(A*\mathrm{sgn}(i)+BF_{m+i})\right)^2 2^{-2R_m}
\end{equation}
\label{proposition:ralation between xc and xp}
\end{prop}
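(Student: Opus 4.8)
The plan is to build directly on Proposition~1, which already expresses the state gap as a deterministic linear combination of the per-slot compression noises. First I would write the difference in the form
\begin{equation}
x_{t+1}^{(c)} - x_{t+1}^{(p)} = \sum_{m=0}^{t} a_{t,m}\, n_m^{(c)}, \qquad a_{t,m} \triangleq \prod_{i=0}^{t-m}(A*\mathrm{sgn}(i)+BF_{m+i}),
\end{equation}
and observe that each coefficient $a_{t,m}$ is a deterministic scalar, since $A$, $B$ and the optimized gains $F_j$ are all fixed. Hence the aggregate bias $\bar{n}_{t+1}^{(c)}$ is a fixed linear functional of the noise collection $(n_0^{(c)},\dots,n_t^{(c)})$.

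Next I would invoke the closure of the Gaussian family under linear maps. Under Assumption~2 each $n_m^{(c)}$ is zero-mean Gaussian, and — reading Assumption~2 as also asserting mutual independence across slots — the collection $(n_0^{(c)},\dots,n_t^{(c)})$ is jointly Gaussian, so any linear combination of its entries is again Gaussian. Thus $\bar{n}_{t+1}^{(c)}$ is Gaussian and is completely characterized by its first two moments. The mean follows immediately from linearity of expectation, $\mathbb{E}[\bar{n}_{t+1}^{(c)}] = \sum_{m=0}^{t} a_{t,m}\,\mathbb{E}[n_m^{(c)}] = 0$.

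The remaining step is the variance, where I would use independence to discard the cross-covariance terms and then substitute the per-slot variance from Assumption~2:
\begin{equation}
\sigma_{t+1}^2 = \mathrm{Var}\!\left(\sum_{m=0}^{t} a_{t,m} n_m^{(c)}\right) = \sum_{m=0}^{t} a_{t,m}^2\, \mathrm{Var}(n_m^{(c)}) = c^2 \sum_{m=0}^{t} a_{t,m}^2\, 2^{-2R_m}.
\end{equation}
Expanding $a_{t,m}^2$ back into the product form recovers exactly the claimed expression for $\sigma_{t+1}^2$, completing the argument.

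The only genuinely delicate point is the vanishing of the cross terms in the variance, which hinges on the compression noises at distinct time slots being uncorrelated (indeed independent). Since Assumption~2 as literally stated only guarantees independence between the noise and the \emph{signals}, I would either read mutual temporal independence as implicit in that assumption or flag it as an additional hypothesis; without it, $\sigma_{t+1}^2$ would pick up extra terms $2\sum_{m<m'} a_{t,m} a_{t,m'}\,\mathrm{Cov}(n_m^{(c)}, n_{m'}^{(c)})$ and the stated closed form would fail. Everything else is a routine application of the mean/variance algebra for independent Gaussians, so this independence assumption is the load-bearing ingredient of the proof.
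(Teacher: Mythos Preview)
Your proposal is correct and follows essentially the same route as the paper: both start from the linear decomposition of Proposition~1, then invoke the fact that a sum of independent zero-mean Gaussians is Gaussian with variance equal to the sum of the scaled variances. Your write-up is in fact more careful than the paper's in flagging that the temporal independence of the $n_m^{(c)}$ is an implicit (but load-bearing) reading of Assumption~2; the paper simply asserts independence without comment.
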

\begin{proof}
Based on Eq.\ref{eq:relation of xc and xp} we have
\[
x_{t+1}^{(c)} =  x_{t+1}^{(p)} + \bar{n}_{t+1}^{(c)}
\]
where
\[
\bar{n}_{t+1}^{(c)} = \sum_{m=0}^{t} \left(\prod_{i=0}^{t-m}(A*\mathrm{sgn}(i)+BF_{m+i}) \right) n_{m}^{(c)}.
\]
Knowing that the sum of independent Gaussian variables still follows Gaussian distribution with the expectation and variance equal to the sum of expectation and variance of each variable, and $n_{m}^{(c)}\sim \mathcal{N}(0, c^2 2^{-2R_m})$ according to Assumption \ref{assumption: compression noise Gaussian distribution}, straightforward proving Proposition \ref{proposition:ralation between xc and xp}.


\end{proof}

As system state bias induced by compression noise can be approximated by additive Gaussian noise, the rate allocation problem defined by \eqref{eq:diff_J} can be proved equivalent to a convex optimization problem, shown in Proposition \ref{proposition:constrained optimization problem.}.
\begin{prop}
    Based on Assumption 1 and Assumption 2,  the original rate allocation problem defined by (\ref{eq:diff_J}) can be equivalently solved by the following constrained optimization problem: 
    \begin{gather}
        \min_{(R_0,R_1,\dots,R_T)} \sum_{t=0}^T a_t 2^{-2R_t}, \notag\\
        \mathrm{s.t.}~~\sum_{t=0}^{T} R_t \leq R_{\text{sum}}, \notag
    \end{gather}
    where
    \small
    \begin{equation}
        a_t = c^2  \sum_{m=t+1}^{T} (Q+F_m^2D) \left(\prod_{i=0}^{m-1-t}(A*\mathrm{sgn}(i)+BF_{t+i})^2\right)   + F_t^2D c^2.
    \end{equation}
\label{proposition:constrained optimization problem.}
\end{prop}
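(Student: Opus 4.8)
The plan is to show that the objective $J = J^{(c)} - J^{(p)}$ equals, up to the multiplicative constant $1/T$ that does not change the minimizer, the separable expression $\sum_{t} a_t 2^{-2R_t}$; since the budget constraint $\sum_t R_t \le R_{\mathrm{sum}}$ is left untouched, the two programs share the same optimizer, and because each $a_t\ge 0$ the reduced objective is a nonnegative combination of the convex maps $R_t \mapsto 2^{-2R_t}$ over a convex feasible set, hence convex. First I would write the stage-wise cost difference in the scalar setting of Assumption 1 as
\[
T \cdot J = \sum_{t=0}^{T-1} \mathbb{E}\big[ Q\big((x_t^{(c)})^2 - (x_t^{(p)})^2\big) + D\big((u_t^{(c)})^2 - (u_t^{(p)})^2\big) \big],
\]
and then substitute $x_t^{(c)} = x_t^{(p)} + \bar{n}_t^{(c)}$ from Proposition \ref{proposition:ralation between xc and xp}, together with $u_t^{(c)} = F_t\big(x_t^{(p)} + \bar{n}_t^{(c)} + n_t^{(c)}\big)$ and $u_t^{(p)} = F_t x_t^{(p)}$.

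The crucial simplification comes from expanding the squares and taking expectations. Since Assumption 2 makes the compression noise zero-mean and independent of the disturbance process, and $x_t^{(p)}$ is a function of $z_0,\dots,z_{t-1}$ only, every cross term of the form $\mathbb{E}[x_t^{(p)}\bar{n}_t^{(c)}]$ and $\mathbb{E}[x_t^{(p)}(\bar{n}_t^{(c)}+n_t^{(c)})]$ vanishes. Moreover $\bar{n}_t^{(c)}$ depends only on $n_0^{(c)},\dots,n_{t-1}^{(c)}$, so it is independent of $n_t^{(c)}$ and the remaining square separates. This leaves the state-cost difference equal to $Q\,\sigma_t^2$ and the control-cost difference equal to $F_t^2 D\,(\sigma_t^2 + c^2 2^{-2R_t})$, so that $T\cdot J = \sum_{t=0}^{T-1}\big[(Q + F_t^2 D)\sigma_t^2 + F_t^2 D\, c^2 2^{-2R_t}\big]$.

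The final step is purely combinatorial: I would insert the shifted variance $\sigma_t^2 = c^2 \sum_{m=0}^{t-1}\big(\prod_{i=0}^{t-1-m}(A\,\mathrm{sgn}(i)+BF_{m+i})\big)^2 2^{-2R_m}$ obtained by re-indexing Proposition \ref{proposition:ralation between xc and xp}, swap the order of the resulting double summation via $\sum_{t=0}^{T-1}\sum_{m=0}^{t-1} = \sum_{m}\sum_{t=m+1}$, and relabel $m \leftrightarrow t$ so as to collect the coefficient multiplying each $2^{-2R_t}$. Adding the diagonal contribution $F_t^2 D c^2$ from the direct control-cost term then yields exactly $a_t$, and the equivalence of the two optimization problems (same objective up to a positive constant, same constraint) follows at once, together with convexity from $a_t\ge0$.

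I expect the main obstacle to be the bookkeeping in this last step: correctly reindexing the nested product-and-sum so that the inner limits become $m = t+1,\dots$ and the product exponent becomes $m-1-t$, while reconciling the finite-horizon convention (the cost sums to $T-1$ whereas $a_t$ is written with upper limit $T$) without introducing off-by-one errors. The other point demanding care is the independence-and-zero-mean argument that eliminates the cross terms, since it is precisely what collapses the apparently intractable cost in \eqref{eq:diff_J} into the clean separable form on which convexity and the closed-form solution rest.
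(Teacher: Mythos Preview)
Your proposal is correct and follows essentially the same approach as the paper: compute the stage-wise expected cost difference, use Proposition~\ref{proposition:ralation between xc and xp} and the zero-mean/independence of the compression noise to reduce it to $(Q+F_t^2D)\sigma_t^2+F_t^2Dc^2 2^{-2R_t}$, then substitute the expression for $\sigma_t^2$ and swap the double sum to collect the coefficient of each $2^{-2R_t}$. If anything, you are more explicit than the paper about why the cross terms vanish and about the summation swap, and you correctly flag the off-by-one bookkeeping at the horizon as the place to be careful.
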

\begin{proof}
The optimality loss at time slot \( t \) can be expressed as
\small
\begin{align*}
&\quad\,\,\mathbb{E}[f_t(x_t^{(c)}, u_t^{(c)}) - f_t(x_t^{(p)}, u_t^{(p)})] \\
&= \mathbb{E}[Q(2x_t^{(p)} + \bar{n}_t^{(c)})\bar{n}_t^{(c)} + F_t^2D(\bar{n}_t^{(c)}+ n_t^{(c)})(2x_t^{(p)} +\bar{n}_t^{(c)}+ n_t^{(c)})] \\
&= Q \sigma^2_{t} + F_t^2D (\sigma_{t}^2 + c^2 2^{-2R_t}) \\
&= (Q+F_t^2D) c^2\sum_{m=0}^{t-1}  \left(\prod_{i=0}^{t-1-m}(A*\mathrm{sgn}(i)+BF_{m+i})^2\right) 2^{-2R_m}\\
&\quad\quad+ F_t^2D c^2 2^{-2R_t}
\end{align*}

For the bit allocation problem, which can be formulated as
\[
\min \sum_{t=0}^{T} \mathbb{E}[f_t(x_t^{(c)}, u_t^{(c)})-f_t(x_t^{(p)}, u_t^{(p)})]
\]
subject to: \vspace{-5pt}
\[
\sum_{t=0}^{T} R_t \leq R_{\text{sum}}
\]
the objective function can be rewritten as
\[
 a_02^{-2R_0}+ a_1 2^{-2R_1} + a_2 2^{-2R_2} + \ldots + a_T 2^{-2R_T}
\]
where
\small
\[
a_t = c^2  \sum_{m=t+1}^{T} (Q+F_m^2D) \left(\prod_{i=0}^{m-1-t}(A*\mathrm{sgn}(i)+BF_{t+i})^2\right)   + F_t^2D c^2 
\]
for $1<t<T$, and $a_T=0$.
\end{proof}

The solution of the optimization problem can be obtained easily by using Lagrangian multiplier approach, and can be written as
\[
R_k^* = \frac{1}{2} \log \frac{a_k}{\sqrt[T]{\prod_{i=0}^{T-1} a_i}} + \frac{R_{\text{sum}}}{T}
\]
for $0\leq k<T$, and $R_T^*=0$.

\begin{coro}
In a system with an invariant controller $F_0=F_1=F_2=\cdots=F$, the optimal rate with a given overall budget \( R_t^* \) is a decreasing function of \( t \), namely
\begin{equation}
    R_0^*>R_1^* > R_{2}^* > \ldots > R_T^*,
\end{equation}
indicating that more transmission resource should be allocated to early stages in this case.
\end{coro}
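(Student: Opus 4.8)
The plan is to reduce the claimed ordering of the optimal rates to a monotonicity property of the coefficients $a_t$ appearing in Proposition~\ref{proposition:constrained optimization problem.}. In the closed-form solution $R_k^* = \frac{1}{2}\log a_k - \frac{1}{2}\log\sqrt[T]{\prod_{i=0}^{T-1}a_i} + \frac{R_{\text{sum}}}{T}$, both the geometric-mean term and $R_{\text{sum}}/T$ are independent of the index $k$. Since $\log(\cdot)$ is strictly increasing, it therefore suffices to show that $t \mapsto a_t$ is strictly decreasing on $\{0,1,\dots,T-1\}$; the strict ordering of the $R_t^*$ then follows at once, with the terminal value $R_T^*=0$ inherited from $a_T=0$.

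Next I would specialize the expression for $a_t$ to the invariant-controller case $F_m \equiv F$. The factor $Q+F_m^2 D$ becomes the constant $Q+F^2 D$, and the product collapses on account of the sign function: the $i=0$ factor contributes $(BF)^2$, while every $i \geq 1$ factor contributes $(A+BF)^2$, so $\prod_{i=0}^{m-1-t}(A\,\mathrm{sgn}(i)+BF)^2 = (BF)^2 (A+BF)^{2(m-1-t)}$. Re-indexing the sum over $m$ then writes $a_t$ as $c^2(Q+F^2D)(BF)^2\sum_{k=0}^{T-1-t}(A+BF)^{2k} + F^2 D c^2$, i.e. a geometric sum whose number of terms decreases as $t$ grows.

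The key step is to form the consecutive difference. Subtracting the two geometric sums telescopes to a single term, giving $a_t - a_{t+1} = c^2(Q+F^2D)(BF)^2(A+BF)^{2(T-1-t)}$ for $0 \leq t \leq T-2$. Because $Q,D>0$ and $c \neq 0$, this quantity is nonnegative, and it is strictly positive whenever the system is genuinely controlled and coupled, i.e. $BF \neq 0$ and $A+BF \neq 0$ (the degenerate cases being of no interest). This establishes $a_0 > a_1 > \dots > a_{T-1}$ and hence the strict chain $R_0^* > R_1^* > \dots > R_{T-1}^*$.

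The main obstacle I anticipate is bookkeeping at the two boundaries rather than any genuine difficulty. I would need to treat the sign-function convention at $i=0$ carefully, since it is exactly what separates the $(BF)^2$ and $(A+BF)^2$ factors and keeps the telescoping clean. I would also handle the terminal slot separately: as $a_T=0$ by construction, $R_T^*$ is set to $0$, and I would argue that the smallest active coefficient $a_{T-1}>0$ keeps $R_{T-1}^*$ strictly above it, closing the chain down to $R_T^*$. A minor side remark is the edge case $(A+BF)^2=1$, where the geometric sum degenerates into a simple count, but the telescoping identity for $a_t-a_{t+1}$ holds verbatim and requires no separate treatment.
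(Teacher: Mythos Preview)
Your argument is correct and is in fact considerably more than what the paper offers. The paper does not prove the corollary at all; it follows the statement with a single heuristic sentence (``the compression noise at early stages will be accumulated in subsequent stages, thus having smaller error in early stages yields a better performance'') and moves on. Your reduction to monotonicity of $a_t$, the specialization of the product using the $\mathrm{sgn}$ convention, and the telescoping identity $a_t-a_{t+1}=c^2(Q+F^2D)(BF)^2(A+BF)^{2(T-1-t)}$ are all sound and constitute an actual proof where the paper supplies only intuition.

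One small point: your handling of the terminal inequality $R_{T-1}^*>R_T^*=0$ is not quite complete. Knowing $a_{T-1}>0$ does not by itself force $R_{T-1}^*>0$, since $a_{T-1}$ is the smallest of the active coefficients and the log term in the closed form is negative there; positivity of $R_{T-1}^*$ requires $R_{\text{sum}}$ large enough that the interior formula is feasible (no water-filling needed). The paper implicitly assumes this regime and never raises the issue, so you are not missing anything the paper supplies, but if you want the chain down to $R_T^*$ to be airtight you should state that assumption explicitly.
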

This can be explained by the fact that the compression noise at early stages will be accumulated in subsequent stages, thus having smaller error in early stages yields a better performance.

\section{Numerical analysis}

In this section, we present simulation results to show the performance of the proposed rate allocation scheme in dynamic wireless systems.

\subsection{Evaluation metric}
A relative LQR cost, represented by $J_{RCost}$, is used to evaluate the performance of the proposed rate-allocation scheme, where
\begin{equation}
    J_{RCost} = (J^{(c)}-J^{(p)})/J^{(p)},
    \label{eq:relative cost}
\end{equation}
given that $J^{(p)}$ and $J^{(c)}$ are respectively the LQR cost of perfect-transmitted system and networked control system, defined in Eq. (\ref{eq:system policy criterion}) and Eq. (\ref{eq:J_c}). The goal of the rate-allocation scheme is to mitigate the additional cost caused by the transmission rate constraint, or equivalently, to minimize the relative LQR cost.

\subsection{Time-invariant systems}
The considered system parameters are defined as follows: $B = 1$, $Q = 2$, and $D = 5$, with $A$ varying between $[0,3]$. For each case, the optimal controller $F_t, 0\leq t\leq T$ is obtained according to the standard formula \cite{shaiju2008formulas}. This study models a 11-stage finite horizon system starting from $x_0=100$. Gaussian noise is added to each state $x_t$, with variance determined by (\ref{assumption: compression noise Gaussian distribution}).
The relative costs are shown in Figs.\ref{fig:cost analysis A}, while Fig. \ref{fig:rate analysis A} illustrates the optimized rate allocation scheme under various settings of $A$. We observe that with a lower $A$ value, the performance of the optimized rate-allocation scheme approaches the constant rate scheme, resulting in a smaller difference in relative LQR costs between the proposed and constant-cost schemes. Conversely, with a higher $A$ value, the superiority of the optimal rate allocation scheme becomes more pronounced, leading to significantly lower LQR costs.

This finding aligns with the following: when $A$ is small, previous states have less influence on the current state, thereby increasing the impact of the current controller. As a result, the noise from the earlier state transmissions has less effect on the overall cost, making the transmission noise for each state equally critical. Conversely, when previous states significantly influence the current state (larger $A$), more resources are dedicated to transmitting earlier states to ensure accuracy and minimize cost impacts.

\begin{figure}[!htbp]
    \centering
    \includegraphics[width=0.75\linewidth]{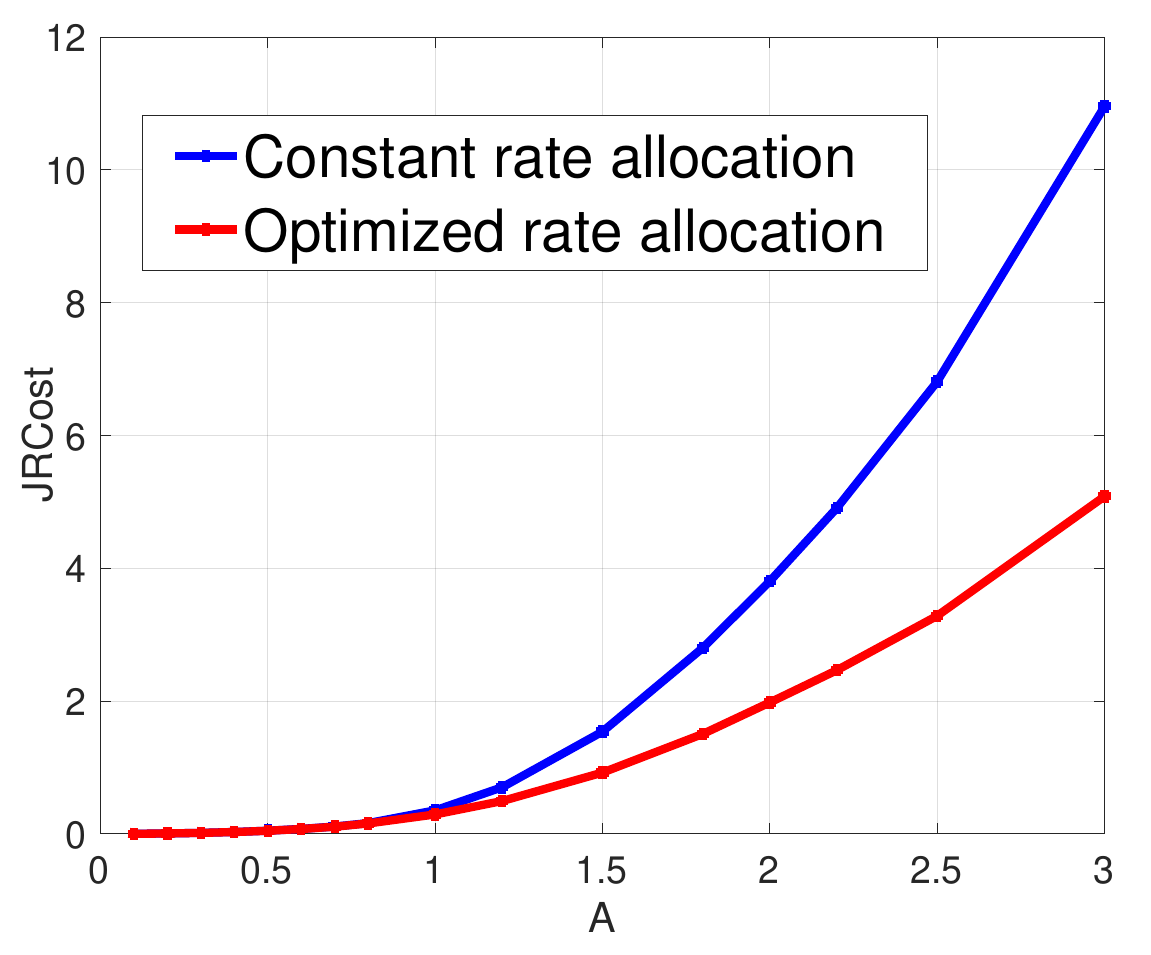}
    \caption{Relative LQR costs (\ref{eq:relative cost}), 
    The effectiveness of the proposed optimal rate allocation scheme is significantly enhanced when the dynamic process is more influenced by state transitions than by controller inputs (i.e., when the value of $A$ is higher), outperforming the constant-rate scheme more noticeably.}
    \label{fig:cost analysis A}
\end{figure}

\begin{figure}[!htbp]
    \centering
    \includegraphics[width=0.75\linewidth]{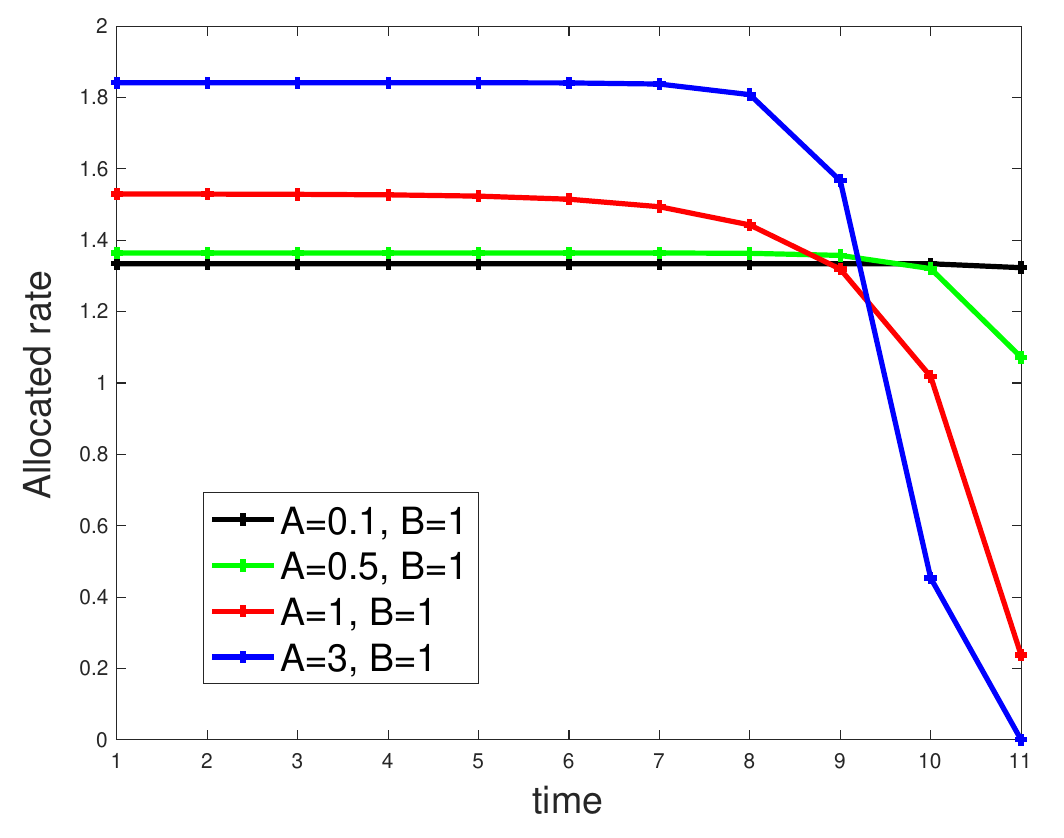}
    \caption{Optimal rate allocation for invariant linear dynamic systems with 11 stages, defined with different $(A,B)$ pairs. 
    It demonstrates that when the dynamical process is more influenced by state transitions than by controller inputs (i.e., when A is larger), more transmission resources should be allocated to the earlier states.}
    \label{fig:rate analysis A}
\end{figure}


\subsection{Time-variant dynamical systems}

Dynamical systems often experience changes due to external disturbances. For example, a control system might alter when faced with environmental perturbations. This paper shows that an optimized rate allocation scheme can also adapt to changes in a dynamical system. Here we show the optimized allocated rate for a variant dynamical system with a change characterized by a sudden shift in the system coefficient $A$.
Specifically, due to the computational complexity involved, we consider a system with 4 stages, undergoing a coefficient change at time $t=3$, switching from $A_1$ to $A_2$. 
Through exhaustive search, the optimal rate allocation for this variable system is determined, revealing that rate allocation should increase when there is a jump in 
$A$. More over, the allocated rate increases in the time before the jump appears.
This adaptation is illustrated in Fig. \ref{fig:rate analysis A variant system}.

\begin{figure}[!htbp]
    \centering
    \includegraphics[width=0.75\linewidth]{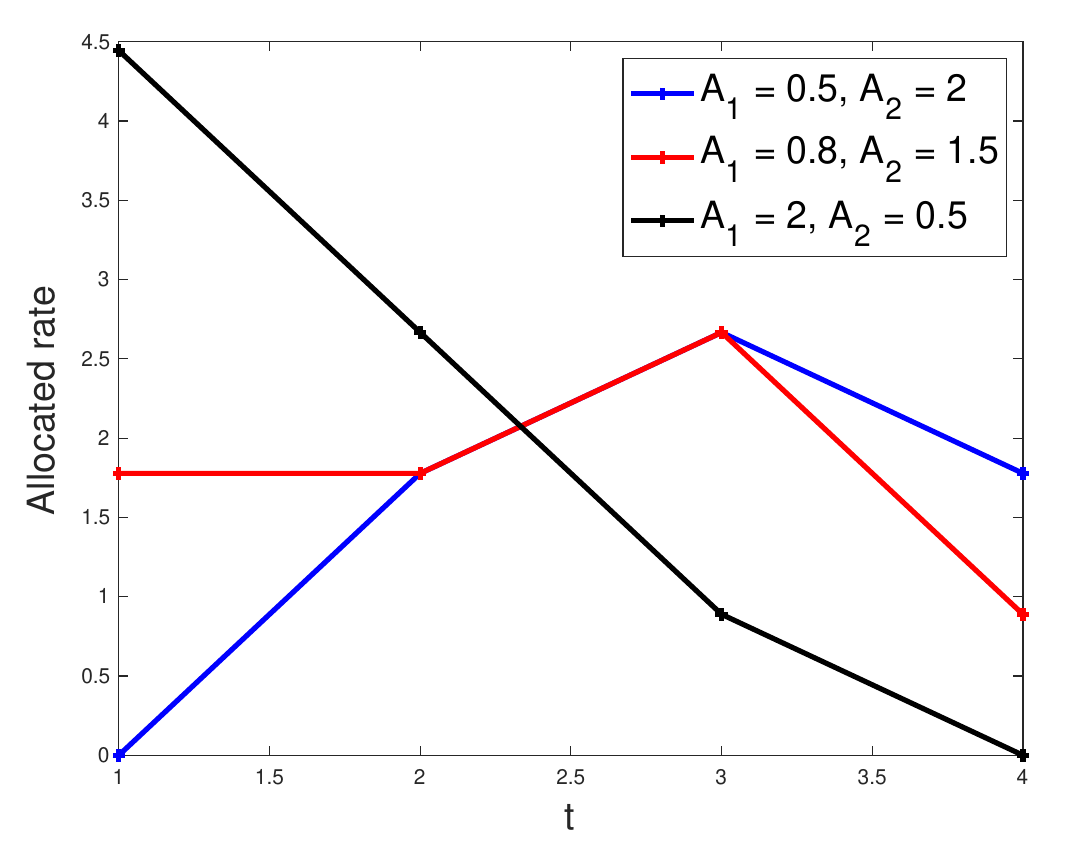}
    \caption{Optimal rate allocation for invariant linear dynamic systems with 4 stages, where $A$ switches from $A_1$ to $A_2$ at time $t=3$. It indicates that when the influence of state information grows (i.e., when the parameter $A$ increases) in a variant dynamic system, more resources should be allocated to the stages near the point of change.}
    \label{fig:rate analysis A variant system}
\end{figure}

\section{Conclusion}
We study goal-oriented compression problems in a linear dynamical system with rate-limited communication resources by considering the long-lasting effect of compression noise on future system states. Our primary objective is to find the optimal transmission rate to reduce the performance degradation induced by compression noise with a given budget. The proposed rate allocation strategy indicates that it is necessary to communicate more in the early stages due to the noise accumulation effect, and also at time slots where there is a noticeable shift of the system dynamics, to mitigate the control action deviation caused by these state changes.


\end{document}